\documentclass{ifacconf}
\usepackage{graphicx} 
\graphicspath{{./images/}} 
\usepackage{mathtools} 
\usepackage{amsmath} 
\usepackage{amssymb} 
\usepackage{comment}
\usepackage{xcolor}
\usepackage{accents}
\usepackage{booktabs}

\usepackage{natbib}  

\newtheorem{remark}{Remark}
\newtheorem{assumption}{Assumption}
\newtheorem{definition}{Definition}

\usepackage{algorithm} 
\usepackage{algpseudocode} 

\newcommand{\set}[1]{\mathcal{#1}}
\newcommand{\ubar}[1]{\underaccent{\bar}{#1}}

\begin{document}
\begin{frontmatter}

\title{Safe Adaptive-Sampling Control via Robust $M$-Step Hold Model Predictive Control\thanksref{footnoteinfo}} 

\thanks[footnoteinfo]{This work is sponsored by the Department of the Navy, Office of Naval Research ONR N00014-24-2099.}

\author[UCB]{Spencer Schutz} 
\author[UCSB]{Charlott Vallon} 
\author[UCB]{Francesco Borrelli}

\address[UCB]{University of California, Berkeley, Berkeley, CA 94720 \\ 
    (e-mail: \{spencer.schutz, fborrelli\}@berkeley.edu).}

\address[UCSB]{University of California, Santa Barbara, Santa Barbara, CA 93106 
    (e-mail: cvallon@ucsb.edu).}

\begin{abstract}      
In adaptive-sampling control, the control frequency can be adjusted during task execution. Ensuring that these changes do not jeopardize the safety of the system being controlled requires attention. We introduce robust $M$-step hold model predictive control (MPC) to address this. Our formulation provides robust constraint satisfaction for an uncertain discrete-time system model with a fixed sampling time subject to an adaptable multi-step input hold (referred to as $M$-step hold). We show how to ensure recursive feasibility of the MPC utilizing $M$-step hold extensions of robust invariant sets, and demonstrate how to 
enable safe adaptive-sampling control via the 
online selection of $M$. We evaluate the utility of the robust $M$-step hold MPC formulation in a cruise control example.
\end{abstract}

\begin{keyword}
Model predictive control, sampled-data/digital control, linear systems
\end{keyword}

\end{frontmatter}


\section{Introduction}\label{sec:intro}
In robust model predictive control (MPC), real-world continuous-time constraints are formulated as discrete-time constraints subject to modeled uncertainty (\cite{bemporad2007robust}). 
Safety is achieved by ensuring robust ``recursive feasibility" of the MPC optimization problem, a guarantee that formulated constraints are robustly satisfiable at all future time steps when the system is in closed-loop with the MPC (\cite{ciocca2017strong, fleming2013regions}). 
Robust recursive feasibility of a closed-loop MPC can be ensured by utilizing robust control invariant sets (\cite{kerrigan2000robust, yu2010robust, rakovic2010parameterized, schafer2023scalable}). If the system state is in a robust control invariant set, the MPC is guaranteed to have a solution that is \textit{(i)} feasible and \textit{(ii)} keeps the system within the robust control invariant set, for all modeled uncertainty realizations. Thus the task of ensuring robust closed-loop safety can be decomposed as first calculating robust control invariant sets and then designing a controller that ensures the system remains within the set.

Numerical techniques for computing robust control invariant sets assume a fixed control frequency  (\cite{Blanchini_Set_Inv},~\cite{Borrelli_MPC_book}). 
``Adaptive-sampling" control varies the control frequency to leverage different benefits throughout the task. High control frequency enables frequent input updates, providing stability for complex tasks and uncertain environments. Lower control frequency means fewer input updates, reducing computation in simple tasks and environments. 
Literature on adaptive-sampling PID (\cite{Dorf_Adaptive_Sampling_PID}), LQR (\cite{Henriksson_Adaptive_Sampling_LQR}), and MPC (\cite{Xue_Adaptive_Sampling_MPC, Gomozov_Adaptive_Sampling_MPC}) 
neither address 
recursive feasibility guarantees nor robustness to uncertainty. 

Robust adaptive-sampling control must preserve constraint satisfaction despite changes in control frequency. 
Here we specifically consider the problem of extending the robust control invariant set method to adaptive-sampling control.
In \cite{Schutz_MSH}, we introduced ``robust $M$-step hold control invariant sets," sets wherein robust constraint satisfiability is guaranteed for systems subject to a multi-step input hold (``$M$-step hold").
Importantly, $M$-step hold control invariance guarantees constraint satisfiability at all time steps despite only receiving control input updates every $M$ time steps. 
This is in contrast to other held-input techniques such as move-blocking MPC~(\cite{Cagienard_Move-block_MPC, Gondhalekar_Safe_Move-block_MPC}) and multi-horizon MPC~(\cite{BehrunaniMultiHorizonMPC}) which require input updates at every time step to ensure recursive feasibility. 

In this work we propose the design of a robust $M$-step hold MPC for safe adaptive-sampling control in conjunction with $M$-step hold invariant sets that guarantee recursive feasibility. We demonstrate a tractable formulation for LTI systems with a simulated cruise control example.

 

\section{Problem Formulation}\label{sec:prob}
Consider a constrained discrete-time, linear time-invariant system model with state $x_t\in\mathbb{R}^n$ and input $u_t\in\mathbb{R}^m$, subject to uncertainty $w_t\in\mathbb{R}^o$ $(t\in\mathbb{N}_0,~M\in\mathbb{N}_+)$:
\begin{subequations}\label{eqn:dyn_const}
\begin{gather}
    x_{t+1}=Ax_t+Bu_t+Ew_t \label{eqn:dyn}\\
    x_t\in\set{X},~u_t\in\set{U}, ~w_t\in\set{W}_{\mathrm{mod}(t,M)}. \label{eqn:const}
\end{gather}
\end{subequations}
\noindent The uncertainty model $\set{W}_{\mathrm{mod}(t,M)}$ is chosen to capture the effects of disturbances and modeling error compared to a real (possibly nonlinear) system (\cite{STEIN20112626, tomlin2017, hur2019, VOELKER2013943, roy2010}). In particular, the subscript ``$\mathrm{mod}(t,M)$'' indicates that the chosen uncertainty bounds reset every $M$ time steps, corresponding to uncertainty propagation between state measurements that occur every $M$ steps.

\begin{assumption}
The uncertainty bounds grow monotonically such that $\set{W}_k \subseteq \set{W}_{k+1}$ for all $k \in \mathbb{N}_{0}$.\label{asm:w_nest}
\end{assumption} 

We consider a controller in the form of an $M$-step hold control policy, as defined next.
\begin{definition} \label{def:msh}
An \textit{\textbf{$M$-step hold control policy}} $\pi^M_t(\cdot)$ calculates an input update every $M$ time steps:
\begin{align}\pi^M_t(\cdot)&=u_t \text{ s.t } u_t=u_{t-1} \text{ if mod}(t,M)\neq0 \nonumber\\
&=u_{M\cdot\lfloor t/M\rfloor } \label{eqn:msh_ctrl}
\end{align}
\end{definition}
The map $t\mapsto M\cdot\lfloor t/M\rfloor$ refers each step $t$ to the start of the $M$-step hold containing $t$, where $\lfloor\cdot/\cdot\rfloor$ indicates floor division. Thus, an $M$-step hold policy applies a constant input between updates. 

We define  \textbf{\textit{robust $M$-step hold model predictive control (MPC)}} as a MPC which solves the following optimal control problem every $M$ time steps (i.e. at all $t$ s.t. mod$(t,M)=0$):
\begin{subequations}\label{eqn:croc_full}
\begin{align}
    \min_{\Pi^M_{t}(\cdot)}~&\bar{x}_{t+N|t}^\top P \bar{x}_{t+N|t} + \sum_{k=t}^{t+N-1} \bar{x}_{k|t}^\top Q \bar{x}_{k|t} +\bar{u}_{i|t}^\top R \bar{u}_{i|t}\\
    \text{s.t.}~~&x_{k+1|t}=Ax_{k|t}+Bu_{i|t}+Ew_{k|t}\\
    &\bar{x}_{k+1|t}=A\bar{x}_{k|t}+B\bar{u}_{i|t}\\
    &x_{k|t}\in\set{X},~x_{t+N|t}\in\set{X}_N\label{eqn:croc_term_const}\\
    &x_{t|t}=\bar{x}_{t|t}=x_t\\
    &u_{i|t}=\pi^M_{i|t}(x_{i|t})\in\set{U}\\
    &\bar{u}_{i|t}=\pi^M_{i|t}(\bar{x}_{i|t})\in\set{U}\\
    &\forall w_{k|t}\in\set{W}_{\mathrm{mod}(k,M)}\\
    &i=M\cdot\lfloor k/M\rfloor,~\forall k\in\{t,\dots,t+N-1\}\nonumber
\end{align}
\end{subequations}
where $\Pi^M_t(\cdot)=\{\pi^M_{t|t}(\cdot),\pi^M_{t+M|t}(\cdot),\dots,\pi^M_{t+N-M|t}(\cdot)\}$ is a set of $M$-step hold policies; note that \eqref{eqn:croc_full} solves for $\frac{N}{M}$ policies. 
The predicted state $x_{k|t}$ at step $k$ must satisfy all state constraints robustly. We denote with $\{\bar{x}_{k|t},\bar{u}_{i|t}\}$ the nominal state and corresponding input, respectively. The sets $\set{X}$, $\set{X}_N$, $\set{U}$, and $\set{W}_{\mathrm{mod}(k,M)}$ are polytopes, and matrices $P\succeq0$, $Q\succeq0$, and $R\succ0$. 

The robust $M$-step hold MPC policy solves~\eqref{eqn:croc_full} every $M$ steps and applies the first optimal $M$-step hold policy to~\eqref{eqn:dyn} for $M$ time steps before re-solving (see Fig.~\ref{fig:msh_mpc}):
\begin{equation}
    \pi^M_{t,MPC}(\cdot)=
    \pi^{M\star}_{i|i}(\cdot) \text{ from~\eqref{eqn:croc_full}}\label{eqn:mpc_pol}, ~i=M\cdot\lfloor t/M\rfloor
\end{equation}
\begin{assumption}\label{asm:N_M}
    $N$ is an integer multiple of $M$.
\end{assumption}

%

Adaptive-sampling control can be achieved by adjusting the value of $M$ in the optimization~\eqref{eqn:croc_full} and policy~\eqref{eqn:mpc_pol}.
In the following section we will:
\begin{enumerate}
    \item show how to design the terminal constraint $\set{X}_N$ in \eqref{eqn:croc_term_const} to ensure recursive feasibility of the resulting controller for a single choice of $M$,
    \item propose a method for overcoming the intractability of optimizing over policies $\Pi^M_t(\cdot)$, and
    \item describe how to safely adapt $M$ while maintaining safety guarantees.
\end{enumerate}


\begin{figure}
    \centering
    \includegraphics[width=1\linewidth]{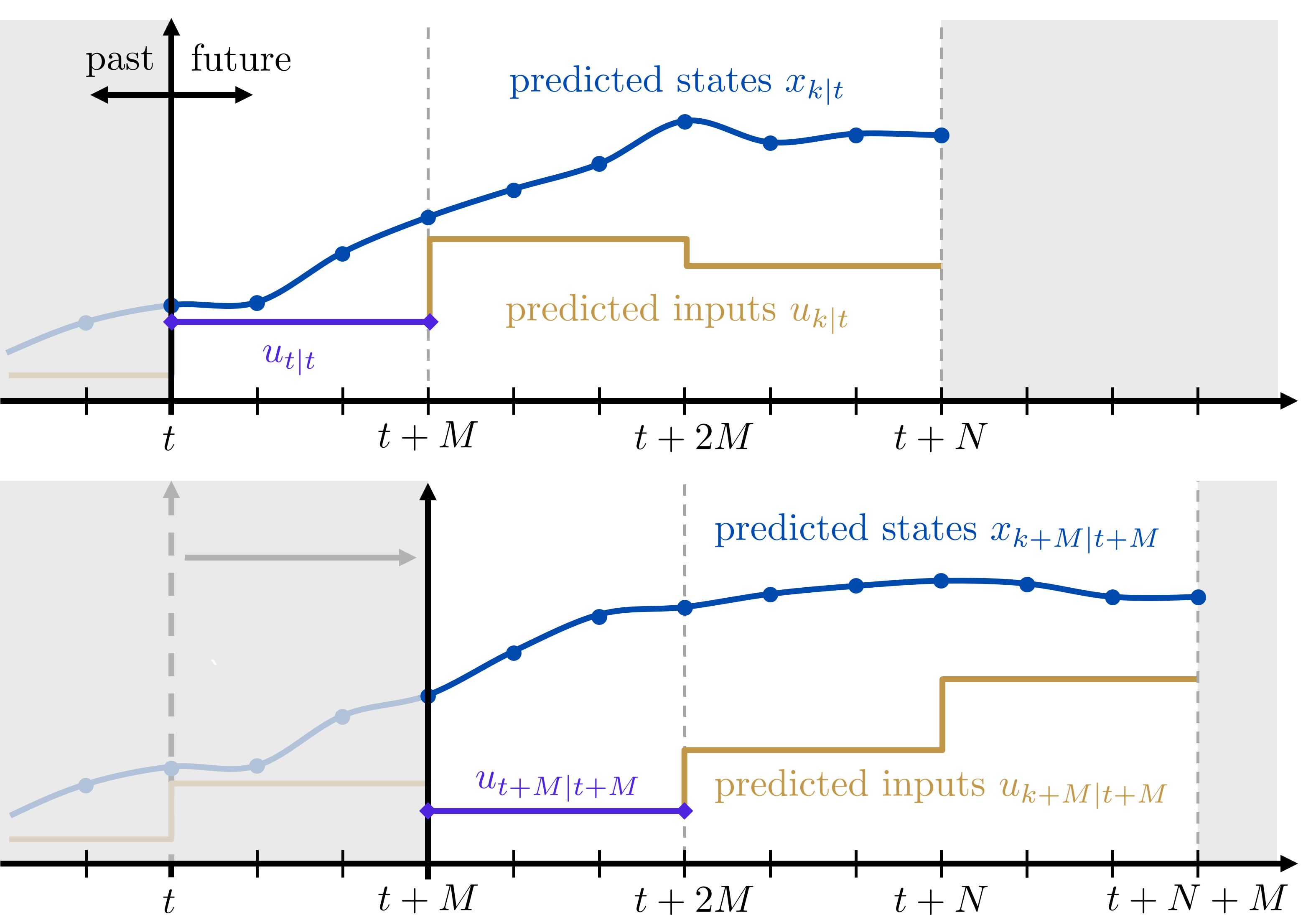}
    \caption{An $M$-step hold MPC is a receding horizon controller that solves an optimization every $M$ steps and applies a constant input between solves.}
    \label{fig:msh_mpc}
\end{figure}

\section{Robust \textit{M}-Step Hold MPC Design}\label{sec:mpc}
We present the design of a recursively feasible, robust $M$-step hold MPC~\eqref{eqn:croc_full}-\eqref{eqn:mpc_pol}.

\subsection{Robust \textit{M}-step Hold Invariance}

Proper selection of the MPC terminal constraint $\set{X}_N$ in \eqref{eqn:croc_term_const} will allow us to guarantee recursive feasibility. 
Our terminal constraint design will require the concept of robust $M$-step hold invariance, which extends traditional invariance to systems in closed-loop with an $M$-step hold. 
\begin{definition}\label{def:msh_ctrl_inv}
A set $\mathcal{C}^M\subseteq{\mathcal{X}}$ is \textit{\textbf{robust M-step hold control invariant}} for system~\eqref{eqn:dyn} with an $M$-step hold and subject to constraints~\eqref{eqn:const} if:
    \begin{align*}
    &x_t\in\set{C}^M \Rightarrow \exists~ u_{t}\in\set{U} \text{ s.t. } x_{t+1}=Ax_t+Bu_{t}+Ew_t,\\
    &~~~x_{t+1}\in\set{X},~x_{t+1}\in\set{C}^M~\text{if }\mathrm{mod}(t+1,M)=0,\\
    &~~~u_t=u_{M\cdot\lfloor t/M\rfloor},~\forall w_t\in\set{W}_{\mathrm{mod}(t,M)},~\forall t\in\mathbb{N}_0.
    \end{align*}
    
\end{definition}
The \textit{\textbf{maximal robust $M$-step hold control invariant set}} $\set{C}^M_\infty$ is the robust $M$-step hold control invariant set containing all $\set{C}^M\subseteq\mathcal{X}$ (see Appendix~\ref{apx:sets}). 

Robust $M$-step hold control invariance requires that the system~\eqref{eqn:dyn} be driven into the invariant set at least every $M$ time steps while robustly satisfying constraints~\eqref{eqn:const} at all time steps.
Periodic set re-entry has been previously defined (in absence of an $M$-step hold) as $k$-recurrence~(\cite{Shen_k-recurrence}) and $p$-invariance~(\cite{Olaru_p-invariance}), but these permit intermittent constraint violation.

While control invariant sets define where recursive feasibility is possible for \textit{some potential} control law, positive invariant sets define where a \textit{specified} feedback policy will be recursively feasible.
\begin{definition}\label{def:msh_pos_inv}
A set $\mathcal{O}^M\subseteq{\mathcal{X}}$ is \textit{\textbf{robust M-step hold positive invariant}} for system~\eqref{eqn:dyn} in closed-loop with a policy $\pi^M_{t}(\cdot)$ and subject to constraints~\eqref{eqn:const} if:
    \begin{align*}
    &x_t\in\set{O}^M \Rightarrow \exists \ u_t\in\set{U} \text{ s.t. } x_{t+1}=Ax_t+Bu_t+Ew_t,\\
    &~~~x_{t+1}\in\set{X},~x_{t+1}\in\set{O}^M~\text{if }\mathrm{mod}(t+1,M)=0,\\
    &~~~u_t=\pi^M_t(\cdot),~\forall w_t\in\set{W}_{\mathrm{mod}(t,M)},~\forall t\in\mathbb{N}_0.
    \end{align*}
\end{definition}
For a chosen policy $\pi^M_{t}(\cdot)$, the \textit{\textbf{maximal robust $M$-step hold positive invariant set}} $\set{O}^M_\infty$ is the robust $M$-step hold positive invariant set containing all $\set{O}^M\subseteq\mathcal{X}$ (see
Appendix~\ref{apx:sets}). Note that for any $\pi^M_{t}(\cdot)$, $\set{O}^M_\infty\subseteq\set{C}^M_\infty$.

We also introduce robust $M$-step hold controllability, which will be used to analyze the set of feasible initial states $\set{X}_0$ for~\eqref{eqn:croc_full} with a chosen $\set{X}_N$. 

\begin{definition} \label{def:msh_pre_set}
The \textbf{\textit{robust $M$-step hold precursor set}} $\mathrm{Pre}^M(\set{S},\set{W})$ to target set $\set{S}$ for system~\eqref{eqn:dyn} with an $M$-step hold and subject to constraints~\eqref{eqn:const} is
\begin{align*}
    &\mathrm{Pre}^M(\mathcal{S},\set{W}) = \big\{x_0:\exists \ u_0\in\mathcal{U} \text{ s.t. }\nonumber \\
    &~~~x_{t+1}=Ax_t+Bu_0+Ew_t,~x_{t+1}\in\set{X},~x_M\in\set{S},\nonumber\\
    &~~~\forall w_t\in\set{W}_t,~\forall t\in\{0,\dots,M-1\}\big\} .
\end{align*}
\end{definition}
Computation of $\mathrm{Pre}^M(\set{S},\set{W})$ is discussed in Appendix~\ref{apx:sets}. 
\begin{definition} \label{def:msh_ctrb_set}
The \textbf{\textit{robust $N$-step $M$-step hold controllable set}} $\set{K}^M_N(\set{S})$ to target $\set{S}$ is defined recursively as
\begin{align*}
    \set{K}^M_{i
    }(\set{S})&=\mathrm{Pre}^M(\set{K}^M_{i-M}(\set{S}),\set{W})\cap\set{X}\\
    \set{K}^M_0(\set{S})&=\set{S}, ~\forall i\in\{M,2M\dots,N\}.
\end{align*}
\end{definition}
For any choice of $\set{X}_N$ in~\eqref{eqn:croc_full}, the corresponding $\set{X}_0=\set{K}^M_N(\set{X}_N)$. By Def.~\ref{def:msh_ctrb_set}, all $x_0\in\set{X}_{0}$ may be driven to $\set{X}_N$ in $N$ steps by a sequence of $\frac{N}{M}$ $M$-step holds while satisfying constraints at all steps for any disturbance realization.

\begin{remark}
Defs.~\ref{def:msh_ctrl_inv}-\ref{def:msh_ctrb_set} describe a relaxed notion of robust $M$-step hold invariance compared to~\cite{Schutz_MSH}. The original notion enforced set containment at all time steps, not every $M$ time steps. 
\end{remark}

\subsection{Tractable Formulation}\label{subsec:rob}
Optimizing~\eqref{eqn:croc_full} over policies is intractable, so we reformulate to optimize over inputs while preserving $\set{X}_0$. Inspired by the approach in~\cite{monimoy_robust_mpc}, we reformulate~\eqref{eqn:croc_full} to search for inputs that \textit{(i)} robustly satisfy open-loop constraints from $x_{t|t}$ to $x_{t+M|t}$ and \textit{(ii)} ensure that we can robustly reach the terminal set $\mathcal{X}_N$ from $x_{t+M|t}$. 
Note point \textit{(ii)} is in contrast with~\eqref{eqn:croc_term_const}, which constrains $x_{t+N|t}$ to be in the terminal set. 
We reformulate as:
\begin{subequations}\label{eqn:croc_min}
\begin{align}
    \min_{U^M_{t}}~&\bar{x}_{t+N|t}^\top P \bar{x}_{t+N|t} + \sum_{k=t}^{t+N-1} \bar{x}_{k|t}^\top Q \bar{x}_{k|t} +\bar{u}_{i|t}^\top R \bar{u}_{i|t}\\
    \text{s.t.}~&\bar{x}_{k+1|t}=A\bar{x}_{k|t}+B\bar{u}_{i|t}\\  
    &\bar{x}_{t+h|t}\in\set{X}\ominus \set{E}_{h}\label{eqn:croc_min_X}\\
    &\bar{x}_{t+M|t}\in\set{K}^M_{N-M}(\set{X}_N)\ominus\set{E}_{M}\label{eqn:croc_min_K}\\
    &\bar{x}_{k|t}\in\set{X}\label{eqn:croc_min_nom}\\
    &\bar{x}_{t|t}=x_t\\
    &\bar{u}_{i|t}\in\set{U},~ i=M\cdot\lfloor k/M\rfloor\\
    \forall h\in&\{1,\dots,M-1\},~\forall k\in\{t,\dots,t+N-1\}\nonumber
\end{align}
\end{subequations}
where $\mathcal{X}_N$ is a robust $M$-step hold invariant set, and 
$\set{E}_k$ is the $k$-step uncertainty forward reachable set:
\begin{equation}\label{eqn:Ek}
    \mathcal{E}_k=\bigoplus_{j=0}^{k-1}A^{k-1-j}E\circ \mathcal{W}_{\mathrm{mod}(j,M)}
\end{equation}
We use $\ominus$ in~\eqref{eqn:croc_min_X}-\eqref{eqn:croc_min_K} to denote the Pontryagin difference, and $\oplus$ and $\circ$ in~\eqref{eqn:Ek} to denote the Minkowsi sum and image, respectively (\cite{Borrelli_MPC_book}). These operators are defined as
\begin{subequations}
    \begin{align}
        \mathcal{A}\ominus\mathcal{B}&=\{x:x+b\in\mathcal{A},~\forall b\in\mathcal{B}\},\label{eqn:pontryagin}\\
        \mathcal{A}\oplus\set{B}&=\{x+b:x\in\set{A},~b\in\set{B}\},\label{eqn:minkowski}\\
        Q\set{A}&=\{Qa: a\in\set{A}\}.\label{eqn:image}
    \end{align}
\end{subequations}

The variables $U^M_t=\{\bar{u}_{t|t},\bar{u}_{t+M|t},\dots,\bar{u}_{t+N-M|t}\}$ are $\frac{N}{M}$ input values, not policies.
Expressing~\eqref{eqn:croc_min_X}-\eqref{eqn:croc_min_K} in terms of (precomputed) sets lets~\eqref{eqn:croc_min} be written only in terms of the nominal states and inputs $\{\bar{x}_{k|t},\bar{u}_{i|t}\}$, and avoids the exponential blow-up of vertex enumeration methods~(\cite{Borrelli_MPC_book}). Constraints~\eqref{eqn:croc_min_X}-\eqref{eqn:croc_min_K} enforce robust open-loop feasibility for $M$ steps. 
By Defs.~\ref{def:msh_pre_set}-\ref{def:msh_ctrb_set}, $\set{X}_0$ for ~\eqref{eqn:croc_min} is $\mathrm{Pre}^M(\set{K}^M_{N-M}(\set{X}_N),\set{W})=\set{K}^M_N(\set{X}_N)$, the same as that of~\eqref{eqn:croc_full}. In the following section, we will show that choosing a robust $M$-step hold invariant $\set{X}_N$ guarantees~\eqref{eqn:croc_min} has a solution every $M$ steps. 

The robust $M$-step hold MPC policy solves~\eqref{eqn:croc_min} every $M$ steps, each time applying the first optimal $M$-step hold input to~\eqref{eqn:dyn} for $M$ time steps before re-solving~\eqref{eqn:croc_min}:

\begin{equation}
    \pi^M_{t,MPC}(\cdot)=
    \bar{u}^{\star}_{i|i} \text{ from~\eqref{eqn:croc_min}}\label{eqn:mpc_pol_new}, ~i=M\cdot\lfloor t/M\rfloor
\end{equation}

All sets in~\eqref{eqn:croc_min} may be precomputed offline, and every online evaluation of~\eqref{eqn:croc_min} only requires solving a single convex quadratic program. This tractable formulation is used in the example in Sec.~\ref{sec:ex}.


\subsection{Terminal Constraint}\label{subsec:const}

We can ensure recursive feasibility of~\eqref{eqn:dyn_const} in closed-loop with~\eqref{eqn:mpc_pol_new} by selecting a robust $M$-step hold (positive or control) invariant terminal set $\set{X}_N$.
\begin{thm}\label{thm:rec_feas}

    Consider an uncertain system~\eqref{eqn:dyn_const} in closed-loop with a robust $M$-step hold MPC policy~\eqref{eqn:croc_min}-\eqref{eqn:mpc_pol_new}. Assume the terminal set  $\set{X}_N$ in~\eqref{eqn:croc_min} is a robust $M$-step hold invariant set for the system~\eqref{eqn:dyn_const} according to Def.~\ref{def:msh_ctrl_inv} or Def.~\ref{def:msh_pos_inv}. Let Asm.~\ref{asm:N_M} hold. Assume $\set{E}_k$ contains the origin. If~\eqref{eqn:croc_min} is feasible at a time step $t=0$, then the robust $M$-step hold MPC policy~\eqref{eqn:croc_min}-\eqref{eqn:mpc_pol_new} is feasible every subsequent $M$ time steps $\{t+M,~t+2M,\dots \}$. 
    Furthermore, the realized system trajectory robustly satisfies all state and input constraints~\eqref{eqn:const} at all $t\in\mathbb{N}_0$.
\end{thm}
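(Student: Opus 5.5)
The argument is an induction over solve instants $t\in\{0,M,2M,\dots\}$. It has two parts: first, a feasible solution of~\eqref{eqn:croc_min} at $t$ guarantees robust constraint satisfaction over the next $M$ steps under the held input $\bar u^\star_{t|t}$; second, the realized state $x_{t+M}$ lies in the feasible set $\set{X}_0=\set{K}^M_N(\set{X}_N)$ of~\eqref{eqn:croc_min}. The key observation is that the constraints of~\eqref{eqn:croc_min} involving the first hold encode exactly membership of $x_t$ in $\mathrm{Pre}^M(\set{K}^M_{N-M}(\set{X}_N),\set{W})$. Recursive feasibility therefore reduces to showing that $\set{K}^M_{N-M}(\set{X}_N)\subseteq\set{K}^M_N(\set{X}_N)$, i.e., that the controllable sets are nested.

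\textbf{Step 1 (open-loop robustness over one hold).} Fix a solve instant $t$ with $\mathrm{mod}(t,M)=0$ and apply $u_{t+h}=\bar u^\star_{t|t}$ for $h=0,\dots,M-1$. By linearity, $x_{t+h}=\bar x_{t+h|t}+\sum_{j=0}^{h-1}A^{h-1-j}Ew_{t+j}$. Since $\mathrm{mod}(t+j,M)=j$, the error term lies in $\set{E}_h$ as defined in~\eqref{eqn:Ek}. The definition of the Pontryagin difference~\eqref{eqn:pontryagin}, together with~\eqref{eqn:croc_min_X}, then gives $x_{t+h}\in\set{X}$ for $h=1,\dots,M-1$, and~\eqref{eqn:croc_min_K} gives $x_{t+M}\in\set{K}^M_{N-M}(\set{X}_N)$. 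The initial state satisfies $x_t=\bar x_{t|t}\in\set{X}$ by~\eqref{eqn:croc_min_nom}, and $u\in\set{U}$ by construction. Moreover, $\set{K}^M_{N-M}(\set{X}_N)\subseteq\set{X}$ by Def.~\ref{def:msh_ctrb_set} when $N>M$, and $\set{X}_N\subseteq\set{X}$ when $N=M$; this yields $x_{t+M}\in\set{X}$. The assumption $0\in\set{E}_k$ is used here to ensure the tightened sets are subsets of the original ones, so that nominal feasibility is consistent.

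\textbf{Step 2 (nesting of controllable sets).} I would prove by induction on $j$ that $\set{K}^M_{jM}(\set{X}_N)\subseteq\set{K}^M_{(j+1)M}(\set{X}_N)$. For the base case $j=0$: $\set{X}_N\subseteq\mathrm{Pre}^M(\set{X}_N,\set{W})\cap\set{X}$. This holds because $\set{X}_N$ is robust $M$-step hold invariant by Def.~\ref{def:msh_ctrl_inv} or Def.~\ref{def:msh_pos_inv}: starting at a state in $\set{X}_N$ at a hold boundary, some held input in $\set{U}$ keeps all intermediate states in $\set{X}$ and returns to $\set{X}_N$ after $M$ steps, for all $w_j\in\set{W}_j$. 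In the positive-invariant case the policy's input serves as the witness. For the inductive step, $\mathrm{Pre}^M(\cdot,\set{W})$ is monotone under inclusion, directly from Def.~\ref{def:msh_pre_set}, and so is intersection with $\set{X}$. Applying them to the inductive hypothesis gives the next inclusion. Using Asm.~\ref{asm:N_M}, $N-M$ is a multiple of $M$, so the chain reaches $\set{K}^M_{N-M}(\set{X}_N)\subseteq\set{K}^M_N(\set{X}_N)=\set{X}_0$.

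\textbf{Step 3 (closing the induction).} By Steps 1 and 2, $x_{t+M}\in\set{X}_0$. It remains to show that membership in $\set{X}_0$ implies feasibility of~\eqref{eqn:croc_min}. The paper asserts that $\set{X}_0$ of~\eqref{eqn:croc_min} equals $\mathrm{Pre}^M(\set{K}^M_{N-M}(\set{X}_N),\set{W})\cap\set{X}$. For this I would exhibit a feasible solution: take the first input to be the witness $u_0$ from Def.~\ref{def:msh_pre_set}. The disturbance-free trajectory, shifted by any element of $\set{E}_h$, is admissible. This gives the tightened constraints~\eqref{eqn:croc_min_X}--\eqref{eqn:croc_min_K} via the decomposition of Step 1 run in reverse, and~\eqref{eqn:croc_min_nom} follows from $0\in\set{E}_h$.

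\textbf{Main obstacle.} The hardest part is the remaining nominal constraints~\eqref{eqn:croc_min_nom} on steps $k>t+M$ with the subsequent inputs. These require a nominal continuation from a point of $\set{K}^M_{N-M}(\set{X}_N)\ominus\set{E}_M$ that stays in $\set{X}$. I would obtain it by choosing, at each subsequent hold, the robust witness input of the controllable set evaluated with $w\equiv0$, which is admissible since $0\in\set{W}_j$ by assumption. Finally, the $M$-periodicity of the disturbance indices $\mathrm{mod}(t+j,M)=j$ at every solve instant, together with the time-invariance of all sets, lets the argument repeat at $t+M, t+2M,\dots$. Concatenating Step 1 across holds gives constraint satisfaction for all $t\in\mathbb{N}_0$.
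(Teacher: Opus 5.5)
Your proposal is correct and follows essentially the same route as the paper: induction over solve instants, the Pontryagin-difference decomposition of realized versus nominal states over one hold, nesting of the controllable sets induced by invariance of $\set{X}_N$, the $\mathrm{Pre}^M$ witness as the candidate first input, and a zero-disturbance continuation for the nominal tail constraint~\eqref{eqn:croc_min_nom}. Your ordering is slightly cleaner than the paper's, which uses $\set{K}^M_{N-2M}(\set{X}_N)\subseteq\set{K}^M_{N-M}(\set{X}_N)$ after invoking $\mathrm{Pre}^M$, whereas you apply $\set{K}^M_{N-M}(\set{X}_N)\subseteq\set{K}^M_N(\set{X}_N)$ before invoking it. You also prove the nesting rather than asserting it, your argument covers $N=M$ (where $\set{K}^M_{N-2M}$ is undefined), and it gives a tail of exactly $\frac{N}{M}-1$ holds without the paper's extra appeal to invariance.
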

\begin{pf}
We will show this using induction.
Let~\eqref{eqn:croc_min}-\eqref{eqn:mpc_pol_new} be feasible at some time $t$. 
The first input $\bar{u}^\star_{t|t}$ satisfies constraints~\eqref{eqn:croc_min_X}-\eqref{eqn:croc_min_K}, so the system's nominal dynamics evolve as
\begin{align*}
    &\bar{x}_{t+k|t}\in\mathcal{X}\ominus\mathcal{E}_k,~\forall k\in\{1,\dots,M-1\}\\
    &\bar{x}_{t+M|t}\in\mathcal{K}^M_{N-M}(\mathcal{X}_N)\ominus\mathcal{E}_M.
\end{align*} 
By definition of the Pontryagin difference~\eqref{eqn:pontryagin}, this implies
\begin{align*}
    &\bar{x}_{t+k|t}+e_{t+k}\in\mathcal{X},~\forall e_{t+k}\in\mathcal{E}_k,~\forall k\in\{1,\dots,M-1\}\\
    &\bar{x}_{t+M|t}+e_{t+M}\in\mathcal{K}^M_{N-M}(\mathcal{X}_N),~\forall e_{t+M}\in\mathcal{E}_M.
\end{align*}
The true dynamics evolve from $x_{t}=\bar{x}_{t|t}$ as 
\begin{equation*}x_{t+k}=\bar{x}_{t+k|t}+e_{t+k},~
\end{equation*}
where $e_{t+k}=\sum_{j=0}^{k-1}A^{k-1-j}Ew_j\in\mathcal{E}_k$, and thus constraints~\eqref{eqn:croc_min_X}-\eqref{eqn:croc_min_K} guarantee
\begin{align*}
    &x_{t+k}\in\mathcal{X},~\forall k\in\{1,\dots,M-1\}\\
    &x_{t+M}\in\mathcal{K}^M_{N-M}(\mathcal{X}_N).
\end{align*}
    
Problem~\eqref{eqn:croc_min} is re-solved at $x_{t+M}$ with initial condition $x_{t+M}=\bar{x}_{t+M|t+M}$.  By Def.~\ref{def:msh_ctrb_set}, 
\begin{equation*}
    x_{t+M}\in \mathrm{Pre}^M(\mathcal{K}^M_{N-2M}(\mathcal{X}_N),\mathcal{W})\cap\mathcal{X}.
\end{equation*}
By Def.~\ref{def:msh_pre_set}, this guarantees the existence of an $M$-step control input $u^\dagger$ that can be applied beginning at time $t+M$ such that 
\begin{align*}
    &x_{t+M+k}\in\mathcal{X},~\forall k\in\{1,\dots,M-1\}\\
    &x_{t+2M}\in\mathcal{K}^M_{N-2M}(\mathcal{X}_N)
\end{align*}
for all uncertainty realizations. We now show that this implies the constraints~\eqref{eqn:croc_min_X}-\eqref{eqn:croc_min_K} are satisfiable beginning at time $t+M$. Note that the true and nominal dynamics are related as 
\begin{equation*}
    x_{t+M+k}=\bar{x}_{t+M+k|t+M}+e_k 
\end{equation*}where $e_k\in\mathcal{E}_k$. This implies that the candidate solution $\bar{u}^\star_{t+M|t+M}=u^\dagger$ guarantees \begin{align*}
    &\bar{x}_{t+M+k|t+M}+e_{k}\in\mathcal{X},~\forall e_k\in\mathcal{E}_k,~\forall k\in\{1,\dots,M-1\}\\
    &\bar{x}_{t+2M|t+M}+e_{M}\in\mathcal{K}^M_{N-2M}(\mathcal{X}_N),~\forall e_M\in\mathcal{E}_M.
\end{align*}
By the Pontryagin difference~\eqref{eqn:pontryagin}, this implies that 
\begin{align*}
    &\bar{x}_{t+M+k|t+M}\in\mathcal{X}\ominus\mathcal{E}_k,~\forall k\in\{1,\dots,M-1\}\\
    &\bar{x}_{t+2M|t+M}\in\mathcal{K}^M_{N-2M}(\mathcal{X}_N)\ominus\mathcal{E}_M
\end{align*}

When $\mathcal{X}_N$ is robust $M$-step hold invariant, the set of states robustly $M$-step hold controllable to $\mathcal{X}_N$ in $N-2M$ steps is a subset of those controllable in $N-M$ steps:
\begin{equation*}
    \mathcal{K}^M_{N-2M}(\mathcal{X}_N)\subseteq\mathcal{K}^M_{N-M}(\mathcal{X}_N)
\end{equation*}
The Pontryagin difference has the property that if $\mathcal{A}\subseteq\mathcal{B}$ then $\mathcal{A}\ominus\mathcal{C}\subseteq\mathcal{B}\ominus\mathcal{C}$~(\cite{Kolmanovsky_sets}), so
\begin{equation*}
    \mathcal{K}^M_{N-2M}(\mathcal{X}_N)\ominus\mathcal{E}_M\subseteq\mathcal{K}^M_{N-M}(\mathcal{X}_N)\ominus\mathcal{E}_M. 
\end{equation*}
Thus, $\bar{x}_{t+2M|t+M}\in\mathcal{K}^M_{N-M}(\mathcal{X}_N)\ominus\mathcal{E}_M$ and the candidate $\bar{u}^\star_{t+M|t+M}=u^\dagger$ satisfies constraints~\eqref{eqn:croc_min_X}-\eqref{eqn:croc_min_K}.

Furthermore, $\bar{x}_{t+2M|t+M}\in\mathcal{K}^M_{N-M}(\mathcal{X}_N)\ominus\mathcal{E}_M$ implies that $\bar{x}_{t+2M|t+M}\in\set{K}^M_{N-2M}(\set{X}_N)$. By Def.~\eqref{def:msh_ctrb_set}, there exists a sequence of $(\frac{N}{M}-2)$ $M$-step holds that robustly drives the system from $\bar{x}_{t+2M|t+M}$ to $\set{X}_N$, and another $M$-step hold that robustly returns the system to the invariant $\set{X}_N$, all while staying within $\set{X}$. Since this robustness includes the uncertainty realization $w_t=0$, there exist $(\frac{N}{M}-1)$ $M$-step holds that keep the nominal $\bar{x}_{t+2M+1|t+M},\dots,\bar{x}_{t+N+M|t+M}\in\set{X}$. This guarantees the existence of a candidate input sequence after $\bar{u}^\star_{t+M|t+M}$ that satisfies constraint~\eqref{eqn:croc_min_nom}.

We have shown that if~\eqref{eqn:croc_min} is feasible at time $t$, then problem~\eqref{eqn:croc_min} has a solution at $t+M$ and $\bar{u}^\star_{t+M|t+M}$ guarantees constraint satisfaction for $M$ steps. By assumption,~\eqref{eqn:croc_min} was feasible at time $t=0$. We conclude by induction that~\eqref{eqn:croc_min} has a solution every $M$ steps and the MPC~\eqref{eqn:croc_min}-\eqref{eqn:mpc_pol_new} guarantees robust constraint satisfaction at every step.\qed
\end{pf}

\begin{remark} \label{rmk:rec_feas}
    The choice of $\set{X}_N$ between different robust $M$-step invariant sets impacts the size of $\set{X}_0=\set{K}^M_N(\set{X}_N)$.
    When $\set{X}_N$ is $M$-step hold (positive or control) invariant,~\eqref{eqn:mpc_pol_new} is recursively feasible. This implies 
    \begin{equation*}
    \set{X}_{0}=\set{K}^M_N(\set{X}_N)=\set{O}^M_{\infty,MPC}.
    \end{equation*}
 The largest possible $\set{O}^M_{\infty,MPC}$ is in fact $\set{C}^M_\infty$, obtained by choosing $\set{X}_N=\set{C}^M_\infty$ itself. Therefore, to maximize the safe operating region of~\eqref{eqn:croc_min}-\eqref{eqn:mpc_pol_new} we should select $\set{X}_N=\set{C}^M_\infty$. 
\end{remark}

\subsection{Safe Adaptive-Sampling Control}\label{subsec:adapt}
The $M$-step hold framework provides a convenient mechanism for safe adaptive-sampling control; specifically, we consider the safe online adaptation of $M$ at time steps $t$ when mod$(t,M)=0$.  

\begin{thm}\label{thm:change_M}
    Consider a system~\eqref{eqn:dyn_const} in closed-loop with a recursively feasible robust $M$-step hold MPC policy~\eqref{eqn:croc_min}-\eqref{eqn:mpc_pol_new}. Let Asm.~\ref{asm:N_M} hold for $M$ and some $\hat{M}\in \mathbb{N}_+$. From state $x_t$ at time $t$ when mod$(t,M)=0$, a robust $\hat{M}$-step hold MPC policy solving~\eqref{eqn:croc_min} with a robust $\hat{M}$-step hold invariant $\hat{\set{X}}_N$ will be recursively feasible if $x_t$ is in the set of feasible initial states $\set{\hat{X}}_0$: \begin{equation*}x_{t}\in\hat{\set{X}}_0=\set{K}^{\hat{M}}_{N}(\hat{\set{X}}_N)
    \end{equation*}
\end{thm}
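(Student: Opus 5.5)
The plan is to reduce the claim to Theorem~\ref{thm:rec_feas}, applied to the $\hat{M}$-step hold MPC restarted at time $t$.

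\textbf{Step 1: re-indexing.} Since $\mathrm{mod}(t,M)=0$, the switch happens at the start of a hold. The $\hat{M}$-step hold problem~\eqref{eqn:croc_min} depends on $t$ only through its initial condition $\bar{x}_{t|t}=x_t$ and through the uncertainty indexing $\set{W}_{\mathrm{mod}(k,\hat{M})}$. The system~\eqref{eqn:dyn_const} is time-invariant, and the uncertainty bounds reset at every state measurement. So I would shift time by $t$, setting $\tau=k-t$. Then $\mathrm{mod}(\tau,\hat{M})=0$ at the switch, the uncertainty model starts again from $\set{W}_0$, and the switched controller is exactly a fresh $\hat{M}$-step hold MPC started at $\tau=0$. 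I would also check that the sets $\set{E}_k$, $\mathrm{Pre}^{\hat{M}}$ and $\set{K}^{\hat{M}}_N$ are all built with $\hat{M}$ and contain the origin, so that the hypotheses of Theorem~\ref{thm:rec_feas} carry over verbatim. Asm.~\ref{asm:N_M} holds for $\hat{M}$ by hypothesis, and $\hat{\set{X}}_N$ is robust $\hat{M}$-step hold invariant by assumption.

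\textbf{Step 2: feasibility at the switch.} Next I would show that $x_t\in\set{K}^{\hat{M}}_N(\hat{\set{X}}_N)$ makes~\eqref{eqn:croc_min} with $\hat{M}$ feasible at $\tau=0$. This is the characterization $\set{X}_0=\mathrm{Pre}^{\hat{M}}(\set{K}^{\hat{M}}_{N-\hat{M}}(\hat{\set{X}}_N),\set{W})\cap\set{X}=\set{K}^{\hat{M}}_N(\hat{\set{X}}_N)$ stated after the reformulation. Concretely, Def.~\ref{def:msh_pre_set} gives an input $u^\dagger$ that keeps every robust successor in $\set{X}$ and drives $x_{\hat{M}}$ into $\set{K}^{\hat{M}}_{N-\hat{M}}(\hat{\set{X}}_N)$. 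Rewriting $x_{h}=\bar{x}_{h}+e_h$ with $e_h\in\set{E}_h$ turns these conditions into~\eqref{eqn:croc_min_X}--\eqref{eqn:croc_min_K} via the Pontryagin difference, as in the proof of Theorem~\ref{thm:rec_feas}. Evaluating the controllability chain Def.~\ref{def:msh_ctrb_set} at $w=0$ then gives the remaining nominal inputs satisfying~\eqref{eqn:croc_min_nom}.

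\textbf{Step 3: conclusion.} With feasibility at $\tau=0$ established, Theorem~\ref{thm:rec_feas} gives feasibility every $\hat{M}$ steps after the switch and robust constraint satisfaction at all later steps. Constraint satisfaction before $t$ comes from the original recursively feasible $M$-step policy, which delivers $x_t$ at a hold boundary without any violation up to that point. Together these give safety over the whole horizon.

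\textbf{Main obstacle.} I expect the main difficulty to be the time re-indexing of the uncertainty sets in Step~1. One must make sure that $\set{W}_{\mathrm{mod}(k,\hat{M})}$ after the switch is consistent with the reset-at-measurement interpretation. If $t$ is not a multiple of $\hat{M}$ in absolute time, the phase has to be defined relative to the switch time. I would state this convention explicitly, since a new measurement is taken at $t$, rather than rely on $\mathrm{mod}(t,\hat{M})=0$. Everything else is a direct reuse of the preceding proof.
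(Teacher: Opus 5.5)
Your proposal is correct and follows the paper's proof: both reduce the claim to Theorem~\ref{thm:rec_feas} applied to the $\hat{M}$-step hold MPC restarted at the switch time. In both, feasibility at the switch comes from the characterization $\hat{\set{X}}_0=\mathrm{Pre}^{\hat{M}}(\set{K}^{\hat{M}}_{N-\hat{M}}(\hat{\set{X}}_N),\set{W})\cap\set{X}=\set{K}^{\hat{M}}_N(\hat{\set{X}}_N)$, and your time re-indexing (measuring the uncertainty phase from the switch), Pontryagin-difference verification and $w=0$ nominal tail just spell out steps the paper leaves implicit.
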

\begin{pf}
    The robust $\hat{M}$-step hold MPC policy solving~\eqref{eqn:croc_min} is recursively feasible by Thm.~\ref{thm:rec_feas} if $\hat{\set{X}}_N$ is $\hat{M}$-step hold invariant and $x_t$ is a feasible initial condition. The latter is true if $x_t$ can be robustly driven into $\set{K}^{\hat{M}}_{N-\hat{M}}(\hat{\set{X}}_N)$ in $\hat{M}$ steps with a held input while robustly satisfying state constraints at all steps. By Defs.~\ref{def:msh_pre_set}-\ref{def:msh_ctrb_set}, the set of such states $\hat{\set{X}}_0$ is $\mathrm{Pre}^{\hat{M}}\big(\set{K}^{\hat{M}}_{N-\hat{M}}(\hat{\set{X}}_N)\big)\cap\set{X}=\set{K}^{\hat{M}}_N(\hat{\set{X}}_N)$.\qed
\end{pf}

\begin{remark}
    The $\hat{M}$-step hold policy in Thm.~\ref{thm:change_M} may use a new horizon $\hat{N}\neq N$ provided $\hat{M}$ and $\hat{N}$ satisfy Asm.~\ref{asm:N_M}.
\end{remark}

A feasible initialization by Thm.~\ref{thm:change_M} is guaranteed at any time step $t$ s.t. mod$(t,M)=0$ for all $\hat{M}$ that are factors of $M$ when the original $\set{X}_N=\set{C}^M_\infty$ and the new $\hat{\set{X}}_N=\set{C}^{\hat{M}}_\infty$. This is because maximal robust $M$-step hold control invariant sets have guaranteed evolution with respect to $M$, as formalized in Thm.~\ref{thm:nest}.

\begin{thm}\label{thm:nest}
    For a model~\eqref{eqn:dyn_const} and any $M,\hat{M}\in\mathbb{N}_+$, if $\hat{M}$ is a factor of $M$, then the maximal robust $M$-step hold control invariant set is a subset of the maximal robust $\hat{M}$-step hold control invariant set:
    \begin{gather*}
       \big(\mathrm{mod}(M,\hat{M}\big)=0\big) \Rightarrow
       \big(\set{C}^{M}_\infty \subseteq \set{C}^{\hat{M}}_\infty\big)
    \end{gather*}
\end{thm}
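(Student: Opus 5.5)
The plan is to show that a suitable enlargement of $\set{C}^M_\infty$ is robust $\hat{M}$-step hold control invariant in the sense of Def.~\ref{def:msh_ctrl_inv}. Maximality of $\set{C}^{\hat{M}}_\infty$ then gives the inclusion. The obvious first attempt is to prove that $\set{C}^M_\infty$ itself is robust $\hat{M}$-step hold control invariant. The idea would be to take the input guaranteed by Def.~\ref{def:msh_ctrl_inv} for $M$ and simply re-apply the same value at each of the $M/\hat{M}$ update instants inside one $M$-hold. That choice is admissible for an $\hat{M}$-step hold, since the input is allowed to change but does not have to.

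That attempt fails for one reason, which I expect to be the main obstacle. $\hat{M}$-step hold invariance demands re-entry into the set every $\hat{M}$ steps, while the $M$-hold only guarantees re-entry every $M$ steps. The intermediate states $x_{j\hat{M}}$ need not lie in $\set{C}^M_\infty$. To handle this, I would work with the union of ``partially elapsed'' sets. For $j\in\{0,\dots,M/\hat{M}-1\}$, let $\set{S}_j\subseteq\set{X}$ be the set of states $x$ for which there exists $u\in\set{U}$ with the following property. Applying $u$ constantly for the remaining $M-j\hat{M}$ steps, under disturbances $w_k\in\set{W}_{j\hat{M}+k}$ for $k=0,\dots,M-j\hat{M}-1$, keeps every state in $\set{X}$ and lands the final state in $\set{C}^M_\infty$, for all such disturbances. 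Set $\set{S}=\bigcup_j \set{S}_j$. By the robust $M$-step hold control invariance of $\set{C}^M_\infty$ (with $j=0$), we have $\set{C}^M_\infty\subseteq\set{S}_0\subseteq\set{S}$.

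Next I would verify that $\set{S}$ is robust $\hat{M}$-step hold control invariant. Take $x\in\set{S}_j$ with witness input $u$, and apply $u$ for $\hat{M}$ steps. In the $\hat{M}$-hold model the disturbances at these steps lie in $\set{W}_k$ for $k=0,\dots,\hat{M}-1$. By Asm.~\ref{asm:w_nest}, $\set{W}_k\subseteq\set{W}_{j\hat{M}+k}$, so every $\hat{M}$-hold disturbance sequence is also admissible for the $M$-hold robustness guarantee. Hence all intermediate states lie in $\set{X}$.

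For the state after $\hat{M}$ steps there are two cases. If $j<M/\hat{M}-1$, the same $u$ witnesses membership of that state in $\set{S}_{j+1}$. This holds because the guarantee for $x\in\set{S}_j$ was robust over all disturbance continuations, and the disturbance sets for the continuation are exactly $\set{W}_{(j+1)\hat{M}+k}$. If $j=M/\hat{M}-1$, the state after $\hat{M}$ steps lies in $\set{C}^M_\infty\subseteq\set{S}_0$. In either case the state re-enters $\set{S}$ at the next multiple of $\hat{M}$, with constraints satisfied at every step and an input in $\set{U}$. The divisibility assumption $\mathrm{mod}(M,\hat{M})=0$ is what makes the blocks tile the $M$-hold exactly, so the last block ends precisely at step $M$.

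Finally, since $\set{S}\subseteq\set{X}$ is robust $\hat{M}$-step hold control invariant, it is contained in the maximal such set (Appendix~\ref{apx:sets}). Therefore $\set{C}^M_\infty\subseteq\set{S}\subseteq\set{C}^{\hat{M}}_\infty$. Besides the intermediate-set construction, the only delicate point is the direction of the disturbance-set comparison, which is exactly what Asm.~\ref{asm:w_nest} supplies.
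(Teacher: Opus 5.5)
Your proof is correct. It shares the paper's skeleton: write $M=q\hat{M}$, take a union of $q$ phase-indexed sets, re-apply the held input at each intermediate update instant, use Asm.~\ref{asm:w_nest} to compare disturbance sets, and conclude by maximality of $\set{C}^{\hat{M}}_\infty$. However, you build the phase sets in the opposite direction.

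The paper uses forward reachable sets. Its $\set{R}_{p\hat{M}}$ are the states reachable from $\set{C}^M_\infty$ under the admissible state--input pairs $\Gamma$ and the $\hat{M}$-periodic uncertainty, and it sets $\Omega=\bigcup_{p=0}^{q-1}\set{R}_{p\hat{M}}$. You use backward ``remaining-hold'' sets $\set{S}_j$, defined by an $\exists u\,\forall w$ condition with the original disturbance indices $\set{W}_{j\hat{M}+k}$. Your set contains the paper's: $\set{R}_{p\hat{M}}\subseteq\set{S}_p$. To see this, concatenate the generating disturbance history, which lies in $\set{W}_{\mathrm{mod}(i,\hat{M})}\subseteq\set{W}_i$, with an arbitrary continuation, then invoke the $M$-hold guarantee from the originating state.

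Your version is more rigorous at the invariance step. The witness input is built into the definition of $\set{S}_j$, so ``the same $u$ witnesses membership in $\set{S}_{j+1}$'' follows directly from open-loop robustness. The paper's forward construction instead must implicitly select, for each reachable state, a generating pair $(x_0,u)\in\Gamma$ and disturbance history before re-applying $u$. Its proof compresses this into ``under valid $\hat{M}$-step hold inputs.'' Your $\set{S}_j$ are also precursor-type sets of the same kind as $\mathrm{Pre}^M$ in Appendix~\ref{apx:sets}, with shifted disturbance indices.

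The paper's forward picture is more intuitive: $\Omega$ is literally where $M$-hold trajectories can be at each phase. Neither argument actually uses maximality of $\set{C}^M_\infty$. Both show that any robust $M$-step hold control invariant set sits inside a robust $\hat{M}$-step hold control invariant set.
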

\begin{pf}
    We prove Thm.~\ref{thm:nest} by constructing a robust $\hat{M}$-step hold control invariant set from valid $M$-step hold trajectories. Let $M=q\hat{M}$ with $q\in\mathbb{N}_{+}$. Let $\Gamma\subseteq\set{C}^M_\infty \times \set{U}$ be the set of all state-input pairs that satisfy the robust $M$-step hold control invariance conditions of Def.~\ref{def:msh_ctrl_inv}. Let $\set{R}_i$ be the $i$-step robust forward reachable set of states from $\set{C}^M_\infty$ under the corresponding constant inputs in $\Gamma$ and the $\hat{M}$-periodic uncertainty sequence $\set{W}_{\mathrm{mod}(i,\hat{M})}$. 
    
    The modulo operator has the property $\mathrm{mod}(i,\hat{M})\leq i$ for all $i\in\mathbb{N}_{0}$, so by  Asm.~\ref{asm:w_nest}, $\set{W}_{\mathrm{mod}(i,\hat{M})}\subseteq\set{W}_i$. Consequently, the reachable sets under $\hat{M}$-periodic uncertainty are subsets of those under the original $M$-step uncertainty. Therefore, $\set{R}_0=\set{C}^M_\infty$, $\set{R}_i\subseteq\set{X}$ for all $i\in\{0,\dots,M-1\}$, and $\set{R}_M\subseteq\set{C}^M_\infty$. 
    
    We construct a candidate set
    \begin{equation}
        \Omega = \bigcup_{p=0}^{q-1}\set{R}_{p\hat{M}}.
    \end{equation}
    
    By definition of $\set{R}_i$, the final element $\set{R}_{(q-1)\hat{M}}$ robustly maps into $\set{R}_{q\hat{M}}=\set{R}_M$ under the inputs from $\Gamma$, which by construction is within $\set{R}_0$. This return property establishes that under valid $\hat{M}$-step hold inputs, $\Omega$ robustly maps into $\set{X}$ every step and into itself every $\hat{M}$ steps, rendering $\Omega$ robust $\hat{M}$-step hold control invariant. Maximality of $\set{C}^{\hat{M}}_\infty$ implies $\Omega\subseteq\set{C}^{\hat{M}}_\infty$, and thus $\set{C}^M_\infty\subseteq\set{C}^{\hat{M}}_\infty$. \qed
\end{pf}
Thms.~\ref{thm:change_M}-\ref{thm:nest} guarantee that the robust $M$-step hold MPC may safely adapt the hold size to any factor of $M$ at any time step $t$ s.t. mod$(t,M)=0$ when using maximal robust $M$-step hold control invariant terminal constraints.

Note that Thm.~\ref{thm:nest} does not necessarily apply if $\mathcal{X}_N$ in~\eqref{eqn:croc_min_K} is a maximal robust $M$-step hold \textit{positive} invariant set (Def.~\ref{def:msh_pos_inv}). 
In this  case, verifying $x_t\in\hat{\set{X}}_0$ is always necessary when switching control frequency, even if the hold duration is being adapted to a factor of $M$.

%

For both control invariant and positive invariant terminal sets, switching to an $\hat{M}>M$ requires first verifying that $x_t\in\hat{\set{X}}_0$. This is intuitive, as larger $M$-step holds must be robust to more open-loop uncertainty propagation and this may not always be feasible at the current state. If $x_t\notin\hat{\set{X}}_0$, the system must be steered into $\hat{\set{X}}_0$ before the switch can safely occur.
Any required pre-switch steering may be performed by adjusting the gains or reference in~\eqref{eqn:croc_min}. 

\section{Example: Cruise Control}\label{sec:ex}
We demonstrate robust $M$-step hold MPC in a cruise control for the first vehicle of a platoon.
\subsection{System and Constraints}
\begin{figure}[t]
    \centering
    \includegraphics[width=1\linewidth]{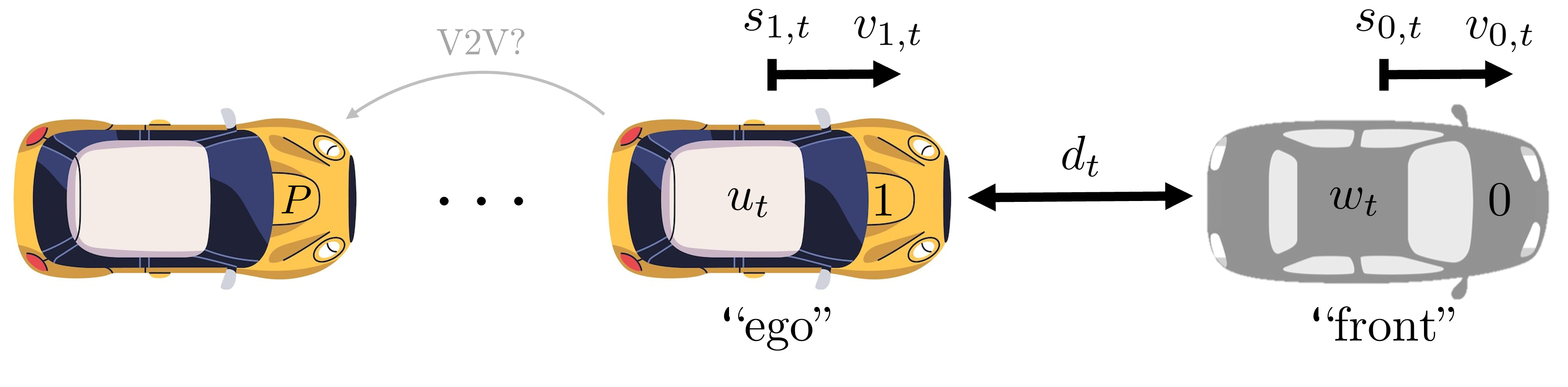}
    \caption{A vehicle platoon behind an uncontrolled front car.}
    \label{fig:platoon}
\end{figure}

Consider the ``front" ($0$) and ``ego" ($1$) cars in Fig~\ref{fig:platoon}, each with a longitudinal position in a road-aligned coordinate $\{s_{0,t},s_{1,t}\}$ and longitudinal velocity $\{v_{0,t},v_{1,t}\}$.
The input $u_t$ and uncertainty $w_t$ are the ego and front accelerations, respectively. Each car is modeled as a double integrator with sampling time $T_s$. For state $x_t=[d_t,v_{1,t},v_{0,t}]^\top$ with $d_t=s_{0,t}-s_{1,t}$,
\begin{align}\label{eqn:acc_dyn}
    x_{t+1}
    &= \begin{bmatrix} 1 & -T_s & T_s \\ 0 & 1 & 0 \\ 0 & 0 & 1 \end{bmatrix} x_t+\begin{bmatrix} -\frac{1}{2}T_s^2 \\ T_s \\ 0 \end{bmatrix} u_t + \begin{bmatrix} \frac{1}{2}T_s^2 \\ 0 \\ T_s\end{bmatrix} w_t \nonumber\\
    &=Ax_t+Bu_t+Ew_t.
\end{align} 
State constraints enforce upper and lower bounds on the following distance and ego velocity.
\begin{equation} \label{eqn:acc_X}
\set{X}=\{x:d_{\mathrm{min}}\leq d\leq d_{\mathrm{max}},~v_{\mathrm{min}}\leq v_1\leq v_{\mathrm{max}}\}
\end{equation}
Input constraints capture physical actuator limits.
\begin{equation} \label{eqn:acc_U}
\set{U}=\{u:u_{\mathrm{min}}\leq u\leq u_{\mathrm{max}}\}
\end{equation}
Following~\cite{Kim_ACC_2019}, we model a switched uncertainty to enforce the ego velocity limits on the front car. 
\begin{align} \label{eqn:acc_W}
    \set{W}_t(v_{0,t})&=\{w_t \mid \ubar{w}_t\leq w_t \leq \bar{w}_t\},\\
    \ubar{w}_t&= \max \left( \frac{v_{\mathrm{min}} - v_{0,t}}{T_s}, w_{\mathrm{min}} \right),\\
    \bar{w}_t&= \min \left( \frac{v_{\mathrm{max}} - v_{0,t}}{T_s}, w_{\mathrm{max}} \right).
\end{align}
Without this switching, the front car has unbounded velocity and robust invariant sets for the ego will be empty~(\cite{Kim_ACC_2019,Lefevre_ACC_2016}). 

\subsection{Robust \textit{M}-Step Hold Control Invariant Sets}
A robust $M$-step hold MPC for the ego car must be robust to the worst-case front car behavior: \textit{(i)} full breaking until reaching $v_{\mathrm{min}}$ and \textit{(ii)} full acceleration until reaching $v_{\mathrm{max}}$. Simultaneously robustifying against \textit{(i)} and \textit{(ii)} is sufficient to robustify against all front car behaviors. This task decomposition was used by~\cite{Kim_ACC_2019} and ~\cite{Lefevre_ACC_2016} to develop a technique for computing robust control invariant sets for~\eqref{eqn:acc_dyn}-\eqref{eqn:acc_W}. We adapt this technique to the $M$-step hold framework. The Multi-Parametric Toolbox 3.0 (MPT3) in MATLAB is used for all set computations~(\cite{MPT3}).

Offline, we compute the robust $M$-step hold control invariant set $\set{C}^M$ as a collection of individual robust $M$-step hold controllable sets for scenarios \textit{(i)} and \textit{(ii)}, denoted $\ubar{\set{C}}^M$ and $\bar{\set{C}}^M$, respectively (Alg.~\ref{alg:acc_sets}). 

Let $\mathbb{X}_v$ be a slice of $\set{X}$ for a fixed front car velocity:
\begin{equation}
    \mathbb{X}_{v}=\{x:x\in\set{X},~v_0=v\}
\end{equation}

To find $\ubar{\set{C}}^M$, we first compute the nominal $M$-step hold control invariant set for the front car at constant $v_{\mathrm{min}}$, denoted $\ubar{\mathbb{C}}^M_0$, using a fixed point algorithm of $\mathrm{Pre}^M(\set{S},0)$ initialized from $\mathbb{X}_{v_{\mathrm{min}}}$. The ego car is safe under scenario \textit{(i)} if it is robustly driven to $\ubar{\mathbb{C}}^M_0$. The states from which this is possible are found by computing robust $M$-step hold controllable sets to $\ubar{\mathbb{C}}^M_0$ iteratively for the full range of $v_0=v_{\mathrm{min}}\rightarrow v_{\mathrm{max}}$, denoted $\ubar{\mathbb{C}}^M_i$ with $i\in\{M,2M,\dots\}$. The collection $\{\ubar{\mathbb{C}}^M_0, \ubar{\mathbb{C}}^M_M, \dots\}$ represents $\ubar{\set{C}}^M$.

Scenario \textit{(ii)} is treated similarly. We find $\bar{\mathbb{C}}^M_0$, the nominal $M$-step hold control invariant set for the front car at constant $v_{\mathrm{max}}$, using $\mathrm{Pre}^M(\set{S},0)$ and $\mathbb{X}_{v_{max}}$. Robust $M$-step hold controllable sets to $\bar{\mathbb{C}}^M_0$ are found across $v_0=v_{\mathrm{max}}\rightarrow v_{\mathrm{min}}$, and the collection $\{\bar{\mathbb{C}}^M_0, \bar{\mathbb{C}}^M_M, \dots\}$ represents $\bar{\set{C}}^M$. Intersecting the two collections represents $\set{C}^M=\ubar{\set{C}}^M\cap\bar{\set{C}}^M$ (Fig.~\ref{fig:sets}). 

Online, we use Alg.~\ref{alg:acc_slices} at each solve of~\eqref{eqn:croc_min} to select a slice of $\set{C}^M$ based on the current $v_{0,t}$, denoted $\mathbb{C}^M_{v_0}$. Formulation~\eqref{eqn:croc_min} ensures recursive feasibility through a constraint on $x_{t+M|t}$, so the proper slice of $\set{C}^M$ to use at step $t$ is that closest to $v_{0,t+M|t}$. As this is unknown, we roll out $M$-steps of scenarios \textit{(i)} and \textit{(ii)} from $v_{0,t}$ to find an upper and lower bound on $v_{0,t+M|t}$. As $v_{0,t+M|t}$  is a continuous variable but Alg.~\ref{alg:acc_sets} computes slices at discrete values, we underestimate the lower bound to the closest $\ubar{\mathbb{C}}^M_{l}$, and overestimate the upper bound to the closest $\bar{\mathbb{C}}^M_{h}$. The intersection $\ubar{\mathbb{C}}^M_{l}\cap \bar{\mathbb{C}}^M_{h}$ is a conservative approximation of $\mathbb{C}^M_{v_0}$. We choose $\set{K}^M_{N-M}(\set{X}_N)=\ubar{\mathbb{C}}^M_{l}\cap \bar{\mathbb{C}}^M_{h}$ in~\eqref{eqn:croc_min}.

\begin{algorithm}[t]
\caption{Offline Computation of $\protect\ubar{\mathcal{C}}$ and $\bar{\mathcal{C}}$} \label{alg:acc_sets}
\textbf{Input:} System~\eqref{eqn:acc_dyn}-\eqref{eqn:acc_W}, $\set{X}$, $\mathbb{X}_{v_{\mathrm{min}}}$, $\mathbb{X}_{v_{\mathrm{max}}}$\\
\textbf{Output:} $\ubar{\mathcal{C}}^M$ and $\bar{\mathcal{C}}^M$
\begin{algorithmic}[1]
    \State $\Omega_0 \gets \mathbb{X}_{v_{\mathrm{min}}},~\Omega_{-1} \gets \emptyset$, $k \gets -1$
    \State \textbf{While} {$\Omega_{k+1} \neq \Omega_k$}
        \State \hspace{\algorithmicindent} $k \gets k + 1$
        \State \hspace{\algorithmicindent} $\Omega_{k+1} \gets \mathrm{Pre}^M(\Omega_k,0) \cap \Omega_k$
    \State $\ubar{\mathbb{C}}^M_0 \gets \Omega_{k+1}$, $v \gets v_{\mathrm{min}}, i \gets -M$
    \State \textbf{While} {$v < v_{\mathrm{max}}$}
         \State \hspace{\algorithmicindent}  $i \gets i + M$
         \State \hspace{\algorithmicindent} $v \gets v - MT_s w_{\mathrm{min}}$
         \State \hspace{\algorithmicindent} $\ubar{\mathbb{C}}^M_{i+M} \gets \mathrm{Pre}^M(\ubar{\mathbb{C}}^M_{i},w_{\mathrm{min}}) \cap \mathcal{X}$
    \State $\ubar{\mathcal{C}}^M \gets [\ubar{\mathbb{C}}^M_0, \ubar{\mathbb{C}}^M_M, \dots, \ubar{\mathbb{C}}^M_{i+M}]$
    \State $\Omega_0 \gets \mathbb{X}_{v_{\mathrm{max}}},~\Omega_{-1} \gets \emptyset$, $k \gets -1$
    \State \textbf{While} {$\Omega_{k+1} \neq \Omega_k$}
        \State \hspace{\algorithmicindent} $k \gets k + 1$
        \State \hspace{\algorithmicindent} $\Omega_{k+1} \gets \mathrm{Pre}^M(\Omega_k,0) \cap \Omega_k$
    \State $\bar{\mathbb{C}}^M_0 \gets \Omega_{k+1}$, $v \gets v_{\mathrm{max}}, i \gets -M$
    \State \textbf{While} {$v > v_{\mathrm{min}}$}
         \State \hspace{\algorithmicindent}  $i \gets i + M$
         \State \hspace{\algorithmicindent} $v \gets v - MT_s w_{\mathrm{max}}$
         \State \hspace{\algorithmicindent} $\bar{\mathbb{C}}^M_{i+M} \gets \mathrm{Pre}^M(\bar{\mathbb{C}}^M_{i},w_{\mathrm{max}}) \cap \mathcal{X}$
   \State $\bar{\mathcal{C}}^M \gets [\bar{\mathbb{C}}^M_0, \bar{\mathbb{C}}^M_M, \dots, \bar{\mathbb{C}}^M_{i+M}]$
\end{algorithmic}
\end{algorithm}

\begin{algorithm}[t] 
\caption{Online Computation of $\mathbb{C}^M_{v_{0}}$}\label{alg:acc_slices}
\textbf{Input:} System~\eqref{eqn:acc_dyn}-\eqref{eqn:acc_W}, $\ubar{\set{C}}^M$, $\bar{\set{C}}^M$, $v_{0}$\\
\textbf{Output:} $\mathbb{C}^M_{v_{0,t}}$
\begin{algorithmic}[1]
    \State $\ubar{v}_{0,t+M}\leftarrow v_{0,t}+MT_s w_{\mathrm{min}}$
    \State $\bar{v}_{0,t+M}\leftarrow v_{0,t}+MT_s w_{\mathrm{max}}$
    \State $l\leftarrow 
    \max_{i\in\{0,M,\dots\}} i~\text{s.t.}~\ubar{v}_{0,t+M}\geq v_{\mathrm{min}}-iT_s w_{\mathrm{min}}$
    \State $h\leftarrow \max_{i\in\{0,M,\dots\}} i~\text{s.t.}~\bar{v}_{0,t+M}\geq v_{\mathrm{max}}-iT_s w_{\mathrm{max}}$
    \State $\mathbb{C}^M_{v_{0}}\leftarrow\ubar{\mathbb{C}}^M_{l}\cap\bar{\mathbb{C}}^M_{h}$    
\end{algorithmic}
\end{algorithm}

\begin{figure}[t]
    \centering
    \includegraphics[width=1\linewidth]{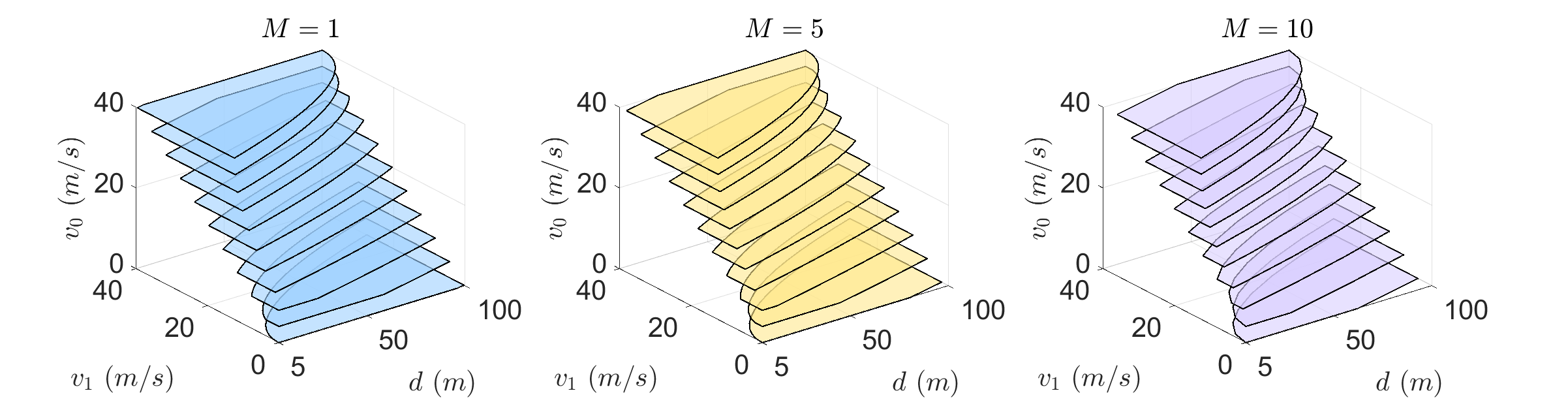}
    \caption{Slices of $\set{C}^1$, $\set{C}^5$, and $\set{C}^{10}$ 
    at discrete values of $v_{0}$.}
    \label{fig:sets}
\end{figure}

\subsection{Simulation Results}
\begin{table}[t]
\centering
\caption{Model and Control Parameters}
\label{tab:parameters}
\begin{tabular}{l l l r}
\toprule
$\{d_{\mathrm{min}},d_{\mathrm{max}}\}$ & distance bounds & m & \{5,100\} \\
$\{v_{\mathrm{min}},v_{\mathrm{max}}\}$ & velocity bounds & m/s & \{0,40\} \\
$\{u_{\mathrm{min}},u_{\mathrm{max}}\}$ & input bounds & m/s$^2$ & \{-4,4\} \\
$\{w_{\mathrm{min}},w_{\mathrm{max}}\}$ & uncertainty bounds & m/s$^2$ & \{-4,4\} \\
$T_s$ & sampling time & s & 0.1 \\
$M$ & $M$-step hold & - &$ \{1,5,10\}$\\
$N$ & MPC horizon & - & 10 \\
$Q$, $P$ & state cost & - & diag$(10,0,0)$ \\
$R$ & input cost & - & 1 \\
$[d_0,v_{1,0},v_{0,0}]^\top$ & initial state & - & $[70, 30, 25]^\top$\\
\bottomrule
\end{tabular}
\end{table}

\begin{figure}[t]
    \centering
    \includegraphics[width=1\linewidth]{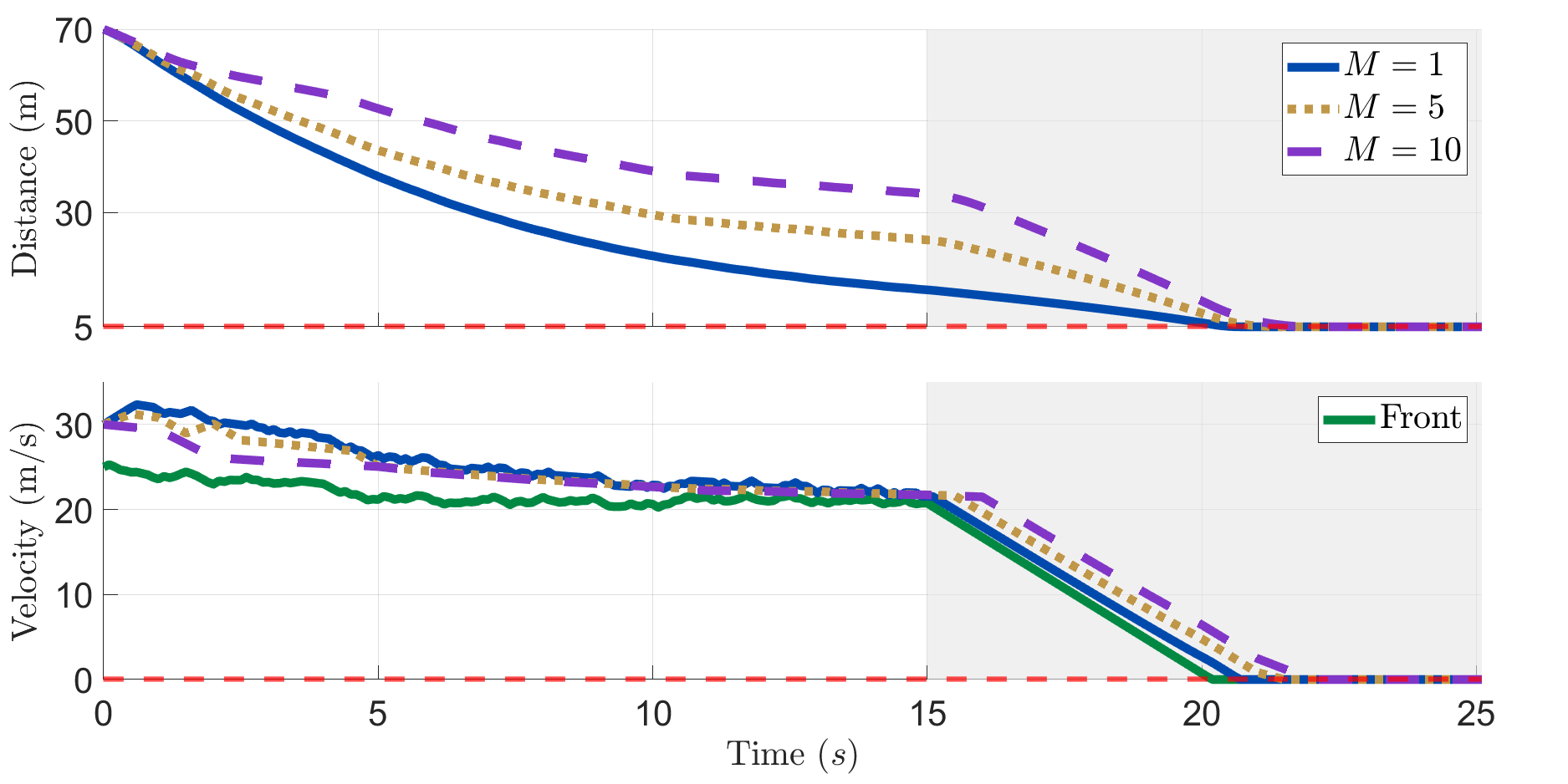}
    \caption{Larger $M$-step holds stay farther behind the front car to be robust to more open-loop uncertainty propagation, and all controllers safely stop the ego when the front car full-breaks until stopped ($\geq15$ s, shaded).}
    \label{fig:sim_break}
\end{figure}

\begin{figure}[t]
    \centering
    \includegraphics[width=1\linewidth]{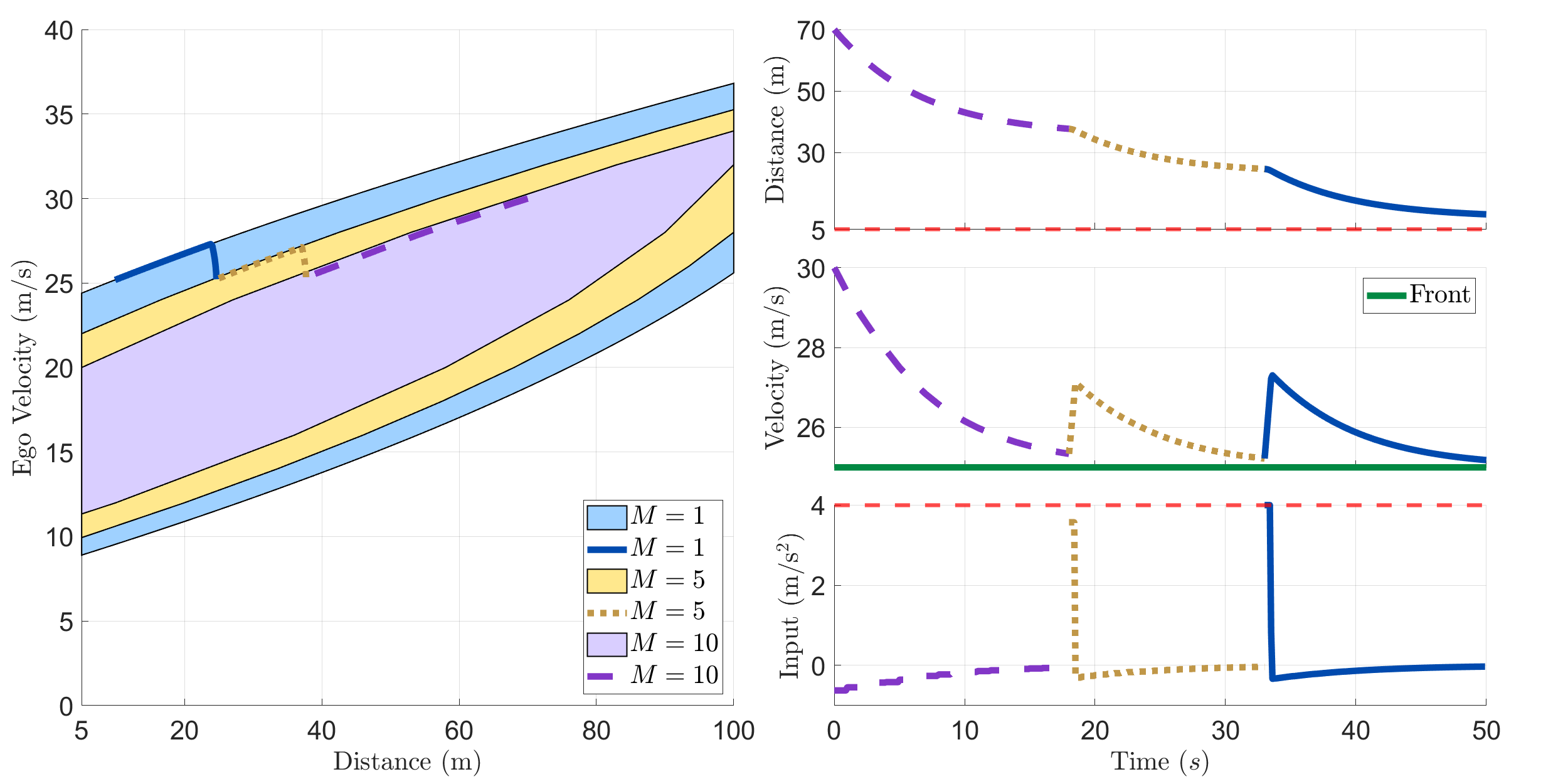}
    \caption{Online adaptation to smaller $M$-step holds expands the safe operating region to include smaller following distances. The slices $\mathbb{C}^M_{v_{0}}$ are nested for factors of $M$ for all $v_0$ ($25$ m/s shown).}
    \label{fig:sim_switch}
\end{figure}

We present two simulations (Figs.~\ref{fig:sim_break}-\ref{fig:sim_switch}) that showcase how the $M$-step hold intuitively affects control performance and robust constraint satisfaction. We simulate robust $M$-step hold MPC of~\eqref{eqn:acc_dyn}-\eqref{eqn:acc_W} that solves~\eqref{eqn:croc_min} with the parameters in Table~\ref{tab:parameters}, $\set{K}^M_N(\set{X}_N)=\mathbb{C}^M_{v_0}$, and $M\in\{1,5,10\}$. We choose a cost that heavily weights the following distance to highlight the effect of the safety constraints. 

Fig.~\ref{fig:sim_break} compares a $1$, $5$, and $10$-step hold MPC with a front car that applies a $w_t$ sampled uniformly at random from $\set{W}_t$ for $15$ seconds, then full breaks until stopped. In the first phase, larger $M$-step holds keep larger distances because they must be robust to more open-loop uncertainty propagation. In the second phase, all $M=\{1,5,10\}$ safely stop the ego car behind the front car. 

Fig.~\ref{fig:sim_switch} shows an $M$-step hold MPC that starts with $M=10$ and automatically adapts to $\hat{M}=5$ then $\hat{M}=1$ at a step $t$ s.t. mod$(t,M)=0$ if the distance has not changed by at least $1\%$ over the last second. The front car is simulated with a constant $v_{0,t}$ ($w_t=0$) so the trajectories may be visualized on a constant $\mathbb{C}^M_{v_0}$. Fig.~\ref{fig:sim_switch} shows that $\mathbb{C}^{10}_{v_0}\subseteq\mathbb{C}^5_{v_0}\subseteq\mathbb{C}^1_{v_0}$ for $v_0=25$ m/s, which we also observe for all $v_0\in[v_\mathrm{min},v_\mathrm{max}]$, demonstrating the intuitive evolution of robust $M$-step hold control invariant sets introduced by Thm.~\ref{thm:nest}. The trajectory, which here is exactly the nominal trajectory, is robustly within the proper $\mathbb{C}^M_{v_0}$ at all steps. When $M$ is decreased, the MPC applies an input to close the gap between the cars until reaching a new equilibrium between the cost and robust constraint satisfaction. This demonstrates how $M$ may be adapted online to adjust the  control performance and computational demand according to the task and environment. We will explore more sophisticated online adaptation algorithms in future work.

%

\section{Conclusion}\label{sec:conc}

This paper introduced robust $M$-step hold MPC, a control design for an uncertain discrete-time system model subject to a variable multi-step ($M$-step) input hold.
We demonstrated how to calculate $M$-step hold robust invariant sets and utilize them in the MPC design to robustly ensure recursive feasibility of the uncertain system. 
Our proposed framework was evaluated in a cruise control simulation example, which demonstrated how uncertainty propagation and input update rate affect control performance. 

\appendix
\section{\textit{M}-Step Hold Set Computations}\label{apx:sets}

\begin{algorithm}[t]
    \caption{Fixed-Point Alg. for $\set{C}^M_\infty$} \label{alg:fix_pt}
    \begin{algorithmic}[1]
        \Statex \hspace*{-\algorithmicindent} \textbf{Input:} System~\eqref{eqn:dyn_const}, $\set{X},~\set{W}$
        \Statex \hspace*{-\algorithmicindent} \textbf{Output: } $\set{C}^M_\infty$
        \State Let $\Omega_0 \gets \mathcal{X},~ \Omega_{-1}=\emptyset,~ k \gets -1$
        \State \textbf{While} $\Omega_{k+1}\neq\Omega_i$\textbf{:}
            \State \hspace{\algorithmicindent} $k \gets k+1$
            \State \hspace{\algorithmicindent} $\Omega_{k+1}\gets \mathrm{Pre}^M(\Omega_{k},\set{W})\cap\Omega_k$
        \State $\set{C}^M_\infty \gets \Omega_{k+1}$
    \end{algorithmic}
\end{algorithm}

Maximal invariant sets are computed via a fixed-point algorithm of precursor sets (Alg.~\ref{alg:fix_pt}). The set $\set{C}^M_\infty$ is found by using $\mathrm{Pre}^M(\set{S},\set{W})$ from Def.~\ref{def:msh_pre_set}. For polytopes $\set{X}=\{x:H_x x\leq h_x\}$, $\set{S}=\{x:H_s x\leq h_s\}$, and $\set{U}=\{u:H_u u\leq h_u\}$, $\mathrm{Pre}^M(\set{S},\set{W})$ is found with a polytope projection:
\begin{gather}
    \mathrm{Pre}^M(\set{S},\set{W}) = 
    \left\{x: \exists ~u~ \text{s.t.} ~\hat{H} 
    \begin{pmatrix} x \\ u \end{pmatrix} \leq \hat{h}\right\} \label{eqn:lti_pre}\\
    \hat{H} = \begin{bmatrix} 
    H_xA & H_xB \\ 
    \vdots & \vdots \\
    H_xA^{M-1} & H_x\left(\sum\limits_{j=0}^{M-2}{A^{j}B}\right)\\
    H_sA^M & H_s\left(\sum\limits_{j=0}^{M-1}{A^{j}B}\right)\\
    0 & H_u \end{bmatrix}~
     \hat{h}=\begin{bmatrix} \tilde{h}_{x,0} \\ \vdots \\ \tilde{h}_{x,M-2} \\ \tilde{h}_{s,M-1} \\ h_u \end{bmatrix} \nonumber\\
     \tilde{h}_{*,t,r}=\min_{w_k\in\set{W}_k}h_{*,t,r}-H_{*,r}\sum_{k=0}^t A^{t-k}Ew_{k} \nonumber 
\end{gather}
where $r$ refers to each row of $\tilde{h}$, $h$, and $H$, and $*=\{x,s\}$. 

The set $\set{O}^M_{\infty}$ is found using a modified precursor which considers~\eqref{eqn:dyn} in closed-loop with a chosen policy $\pi^M_t(\cdot)$:
\begin{align}
    &\mathrm{Pre}_\pi^M(\mathcal{S},\set{W}) = \big\{x_0:\exists \ u_0\in\mathcal{U} \text{ s.t. }\nonumber \\
    &~~~x_{t+1}=Ax_t+Bu_0+Ew_t,~x_{t+1}\in\set{X},~x_M\in\set{S},\nonumber\\
    &~~~u_0=\pi^M_0(\cdot),~\forall w_t\in\set{W}_t,~\forall t\in\{0,\dots,M-1\}\big\}.\label{eqn:pos_pre}
\end{align}
 For the same polytopic constraints and chosen policy $\pi^M_{t}(\cdot)=-Kx_{M\cdot\lfloor t/M\rfloor}$,~\eqref{eqn:pos_pre} is a polytope:
\begin{gather}
    \mathrm{Pre}_\pi^M(\cdot) = 
    \big\{x: ~\hat{H}_\pi 
    x \leq \hat{h}\big\} \label{eqn:lti_pos_pre}\\ \hat{H}_\pi=\begin{bmatrix}
    H_x(A-BK)\\
    \vdots\\ 
    H_x\left(A^{M-1}-\sum\limits_{j=0}^{M-2}{A^{j}BK}\right)\\
    H_s\left(A^{M}-\sum\limits_{j=0}^{M-1}{A^{j}BK}\right)\\
    -H_u K
    \end{bmatrix} \nonumber
\end{gather}
where $\hat{h}$ is unchanged from~\eqref{eqn:lti_pre}. 

\bibliography{root}

\end{document}